\documentclass[12pt]{article}
\usepackage{amssymb,amsfonts,latexsym,amsmath,theorem,mathrsfs,graphicx,xcolor}
\textheight 22.5cm\topmargin -0.4 in
\textwidth 16.8cm\oddsidemargin 0in\evensidemargin 0in

%\usepackage{tikz-cd}
%\usepackage{tikz-3dplot}
%\usetikzlibrary{calc}
%\usepackage[T1]{fontenc}
\usepackage[square,sort&compress,numbers]{natbib}
\usepackage{subcaption}
\usepackage{hyperref}
\usepackage[utf8]{inputenc}
\usepackage[english]{babel}

\newcommand{\braket}[1]{\langle#1\rangle}

\DeclareMathOperator{\SO}{SO}
\DeclareMathOperator{\SU}{SU}
\DeclareMathOperator{\U}{U}

\DeclareMathOperator{\Tr}{Tr}

\DeclareMathOperator{\Id}{Id}

\theoremstyle{plain}
\newtheorem{thm}{Theorem}

\newtheorem{proposition}[thm]{Proposition}

{\theorembodyfont{\rmfamily}

}
%\theoremheaderfont{\scshape}

%\renewcommand{\theequation}{\arabic{section}.\arabic{equation}}
%\newcommand{s}{\setcounter{subsection}{2}}
%\newcommand{\newp}{\setcounter{thm}{16}}
\newenvironment{proof}{\noindent{\it Proof:\, }}{\hfill$\Box$\vspace*{0.5cm}
}

\begin{document}

\title{\textbf{Skyrmion crystals stabilized by $\omega$-mesons}}

\author{\Large Derek Harland\footnote{d.g.harland@leeds.ac.uk},
Paul Leask\footnote{paulnleask@live.co.uk  (corresponding  author)},\: and Martin Speight\footnote{j.m.speight@leeds.ac.uk} \\ \\
School of Mathematics, University of Leeds, Leeds LS2 9JT, UK}
\date{\today}

\author{\Large Derek Harland\textsuperscript{1}\footnote{d.g.harland@leeds.ac.uk},
Paul Leask\textsuperscript{1,2}\footnote{palea@kth.se  (corresponding  author)},\: and Martin Speight\textsuperscript{1}\footnote{j.m.speight@leeds.ac.uk} \\ \\
\small \textsuperscript{1}\textit{School of Mathematics, University of Leeds, Leeds LS2 9JT, UK} \\
\small \textsuperscript{2}\textit{Department of Physics, KTH Royal Institute of Technology, 10691 Stockholm, Sweden}}
\date{\today}

\maketitle

\begin{abstract}
We investigate the ground state crystalline structure of nuclear matter in the $\omega$-meson variant of the Skyrme model.
After minimizing energy with respect to variations of both the Skyrme field and the period lattice, we find four distinct periodic solutions which are similar to those found in the standard Skyrme model.
We use these crystals to calculate coefficients in the Bethe--Weizs\"acker semi-empirical mass formula and the compression modulus of infinite nuclear matter, and find a significant improvement as compared with other variants of the Skyrme model.
\end{abstract}

%%%%%%%%%%%%%%%%%%%%%%%%%%%%%%%%%%%%%%%%%%%%%%%%%
% INTRODUCTION
%%%%%%%%%%%%%%%%%%%%%%%%%%%%%%%%%%%%%%%%%%%%%%%%%

\section{Introduction}

Skyrme models are a class of chiral Lagrangians in which baryons are modeled using topological solitons.
In common with all chiral Lagrangians, they can be considered low-energy descriptions of quantum chromodynamics (QCD).
The identification of solitons with baryons is justified by Witten's observation that the baryon number $B\in\mathbb{Z}$ can be identified with the degree of the chiral field $\varphi: \mathbb{R}^3 \rightarrow \SU(2)$ \cite{Witten_1983_1,Witten_1983_2}.

The simplest chiral Lagrangian, consisting of a nonlinear sigma model and a pion mass term, does not support stable solitons, so additional terms must be included in the Lagrangian in order to stabilize them.
Skyrme's proposal was the inclusion of a higher fourth-order term with opposing scaling behavior to provide the soliton with a scale \cite{Skyrme_1961}.
This is known as the standard Skyrme model, and it is widely studied as a model of atomic nuclei and dense nuclear matter.

Dense nuclear matter can be modeled in the standard Skyrme model using crystals \cite{Huidobro_2023}.
These are periodic solutions of the Euler--Lagrange equations that minimize the energy per unit cell.
In the model with massless pions, the crystal with the lowest energy per baryon number was discovered independently by Kugler and Shtrikman \cite{Kugler_1988} and Castillejo \textit{et al.}\ \cite{Castillejo_1989}.
This crystal resembles a cubic lattice, with each vertex carrying baryon number $1/2$, and is referred to as a lattice of half-skyrmions, or the $\textup{SC}_{1/2}$ crystal. 
Crystals in the model with non-zero pion mass have been investigated recently \cite{Leask_2023}.
The pion mass breaks chiral symmetry, and as a result the $\textup{SC}_{1/2}$ crystal degenerates to four distinct crystals with slightly different energies.
Two of these crystals, including the one with the lowest energy, do not enjoy cubic lattice symmetry and their fundamental domains are cuboidal but not cubic.
The discovery of these non-cubic crystals was enabled by a new energy-minimization algorithm that allows the lattice structure, as well as the Skyrme field, to vary.

This paper concerns a variant of the Skyrme model that was first proposed by Adkins and Nappi \cite{Nappi_1984}.
This model does not include a Skyrme term; instead, solitons are stabilized by an $\omega$-meson field that is coupled anomalously to the chiral field $\varphi$ through the Wess--Zumino term.
While it is well-motivated, this model initially received less attention than the standard Skyrme model because it has proved much harder to find soliton solutions.
The technical reason for this is that the energy is not bounded from below, rendering gradient descent-based energy minimization algorithms useless.
Interest in the model was revived in \cite{Sutcliffe_2009}, which constructed the first topological solitons with $B>1$, albeit within the rational map approximation.
Progress has also been made in the $B=1$ sector \cite{Speight_2018}, where it was shown that a simple perturbation of the model can reproduce the neutron-proton mass difference.

In a recent paper \cite{Gudnason_2020} a new method for constructing solitons in this model was developed.
Therein, they found true static solutions for topological charges $1$ through $8$, for a range of coupling constants.
Suitably calibrated, the model reproduces several properties of atomic nuclei with reasonable accuracy.
In particular, its classical binding energies are comparable with experimental values, which is not the case in the standard Skyrme model.
With this calibration the shapes of solitons are sometimes different from the standard Skyrme model, and from the predictions of the rational map approximation.

This paper presents an investigation of crystals in the $\omega$-meson Skyrme model, using the methods of \cite{Leask_2023} and \cite{Gudnason_2020}.
As in the standard Skyrme model, we find four distinct crystals, and the crystal with the lowest energy differs from  the crystal of \cite{Kugler_1988,Castillejo_1989} both in its iso-orientation and its symmetries.
From these crystals we are able to calculate coefficients in the Bethe--Weizs\"acker semi-empirical mass formula and the nuclear matter incompressibility coefficient.
In both cases we obtain more acceptable values than had been obtained in other Skyrme models.
Furthermore, the method developed herein has also been utilized to determine crystals in the baby Skyrme model coupled to the $\omega$-meson \cite{leask2024baby}.

Skyrmion crystals involving vector mesons have been studied elsewhere in the literature, for example in \cite{Park_2004,Harada_2013}.
These papers did not use the Adkins--Nappi model considered here but instead investigated more complicated Skyrme models: the model in \cite{Park_2004} included $\rho$-mesons and scalar fields in addition to the pion and $\omega$ fields, while the model in \cite{Harada_2013} was based on the entirely different framework of hidden local symmetry (HLS) and holography.
In the HLS approach, the hidden symmetry of the nonlinear $\sigma$ model is gauged and the corresponding gauge particle acquires mass through the Higgs mechanism \cite{Forkel_1991}.
This allows for the incorporation of $\rho$-mesons, as well as the $\omega$-meson.
Nevertheless, in both of these papers, the Kugler--Shtrikman Fourier series method \cite{Kugler_1988} was generalized to incorporate vector mesons.
This method assumed a cubic lattice symmetry from the outset and did not allow the lattice geometry to vary, so it is was not able to find the new lower-energy crystals discovered in this paper.
The recent paper \cite{Barriga:2023jam} constructs exact solutions to the Adkins--Nappi model on compact domains.
However, unlike the crystals studied here, these solutions are time-dependent.
When extended periodically, the solutions of \cite{Barriga:2023jam} have zero baryon number per unit period, whereas our skyrmion crystals have non-zero baryon number per unit period.

The $\omega$-meson model is presented in the next section, where we also present a new topological energy bound valid for crystals and, more generally, for skyrmions on compact domains.
In section \ref{sec:Stress-energy tensor} we present our energy-minimization algorithm and compute the relevant stress-energy tensor.
We present our new crystal solutions in sections \ref{sec:Results}, and proceed to calculate the semi-empirical mass formula and nuclear matter incompressibility coefficient in sections \ref{sec:SEMF} and \ref{sec:incompressibility}.
We draw conclusions in section \ref{sec:conclusion}.

%%%%%%%%%%%%%%%%%%%%%%%%%%%%%%%%%%%%%%%%%%%%%%%%%
% THE OMEGA SKYRME MODEL
%%%%%%%%%%%%%%%%%%%%%%%%%%%%%%%%%%%%%%%%%%%%%%%%%

\section{The $\omega$-Skyrme model}

The $\omega$-meson variant of the Skyrme model is a non-linear sigma model coupled to the isoscalar $\omega$ vector meson field.
It consists of the Skyrme field $\phi: \mathbb{R}^{1,3} \rightarrow \SU(2)$ and the $\omega$ vector meson, which is a 1-form on $\mathbb{R}^{1,3}$.
Here $\mathbb{R}^{1,3}=\mathbb{R}\times\mathbb{R}^3$ is Minkowski space with metric $\eta$ and metric signature $+---$.
The $\omega$-Skyrme Lagrangian defined by Adkins \& Nappi \cite{Nappi_1984} is given by
\begin{equation}
\label{eq: Omega Lagrangian}
    \mathcal{L} = \mathcal{L}_{\phi} + \mathcal{L}_{\omega} + \mathcal{L}_{\textup{WZ}}.
\end{equation}
Here $\mathcal{L}_{\phi}$ is the sigma model Lagrangian with the explicit chiral symmetry breaking pion mass term,
\begin{equation}
    \mathcal{L}_{\phi} = -\frac{F_{\pi}^2 m_{\pi}^2}{8\hbar^3} \Tr\left( \Id_2 - \phi\right) - \frac{F_{\pi}^2}{16\hbar}\eta^{\mu\nu} \Tr(L_\mu L_\nu), \quad L_\mu=\phi^\dagger\partial_\mu\phi.
\end{equation}
The minimally broken $\U(1)_V$ Lagrangian for spin-1 mesons is given by the term
\begin{equation}
\label{eq: Omega Lagrangian term}
    \mathcal{L}_{\omega} = \frac{m_{\omega}^2}{2\hbar^3}\eta^{\mu\nu} \omega_\mu \omega_\nu - \frac{1}{4\hbar} \eta^{\mu\alpha}\eta^{\nu\beta} \omega_{\mu\nu} \omega_{\alpha\beta},\quad \omega_{\mu\nu}=\partial_\mu\omega_\nu-\partial_\nu\omega_\mu,
\end{equation}
and the gauged Wess-Zumino term is
\begin{equation}
\label{eq: Wess-Zumino Lagrangian term}
    \mathcal{L}_{\textup{WZ}} = \beta_{\omega} \omega_\mu \mathcal{B}^\mu, \quad \mathcal{B}^\mu = \frac{1}{24 \pi^2\sqrt{-\eta}} \epsilon^{\mu \nu \rho \sigma} \Tr(L_\nu L_\rho L_\sigma),
\end{equation}
which describes the the coupling of the $\omega$-meson to three pions.
The baryon number can be identified with a topological charge and is given by
\begin{equation}
\label{eq: Baryon number}
    B = \int_{\mathbb{R}^3} \textup{d}^3x \sqrt{-\eta} \, \mathcal{B}^0.
\end{equation}

The main free parameters of this model are the pion decay constant $F_{\pi}$, the pion mass $m_{\pi}$, the $\omega$-meson mass $m_{\omega}$, and the coupling constant $\beta_{\omega}$, while $\hbar=197.3$ MeV fm is the reduced Planck constant.
The coupling constant $\beta_{\omega}$ can be related to the $\omega \rightarrow \pi^+\pi^-\pi^0$ decay rate, which is in reality enhanced by the resonance $\omega \rightarrow \rho + \pi$, but is not included in the current theory.
The decay rate, calculated using fiducial experimental values, is found to be $\Gamma_{\omega \rightarrow 3\pi}=8.49$ MeV, which gives the upper bound $\beta_{\omega} \leq 23.9$ \cite{Gudnason_2020}.

For convenience, we follow Sutcliffe \cite{Sutcliffe_2009} and rescale the $\omega$ meson by $\omega \mapsto \omega F_{\pi}$, and choose the classical energy scale to be $\tilde{E}=F_{\pi}^2/m_{\omega}$ (MeV) and the length scale to be $\tilde{L}=\hbar/m_{\omega}$ (fm).
Then the rescaled $\omega$-Skyrme Lagrangian in dimensionless units is given by
\begin{align}
\label{eq:Lagrangian density}
    \mathcal{L} = \, & -\frac{m^2}{8}  \Tr\left( \Id_2 - \phi \right) -\frac{1}{16}\eta^{\mu\nu} \Tr(L_\mu L_\nu) + \frac{1}{2} \eta^{\mu\nu} \omega_\mu \omega_\nu - \frac{1}{4} \eta^{\mu\alpha}\eta^{\nu\beta} \omega_{\mu\nu} \omega_{\alpha\beta} + c_{\omega} \omega_\mu \mathcal{B}^\mu,
\end{align}
where the rescaled pion mass and $\omega$ coupling constant are, respectively, $m = m_{\pi}/m_{\omega}$ and $c_{\omega}=m_{\omega}\beta_{\omega}/F_{\pi}$.
The energy-momentum tensor (in dimensionless Skyrme units) is given by
\begin{align}
    T_{\mu\nu} = \, & \frac{2}{\sqrt{-\eta}}\frac{\partial (\sqrt{-\eta}\mathcal{L})}{\partial \eta^{\mu\nu}} = 2\frac{\partial \mathcal{L}}{\partial \eta^{\mu\nu}} - \eta_{\mu\nu}\mathcal{L} \nonumber \\
    = \, & -\frac{1}8 \Tr(L_\mu L_\nu) + \omega_\mu \omega_\nu - \eta^{\alpha\beta} \omega_{\mu\alpha} \omega_{\nu\beta} - \eta_{\mu\nu} \left\{-\frac{m^2}{8} \Tr\left( \Id_2 - \phi \right) \right. \nonumber \\
    \, & \left.- \frac{1}{16} \eta^{\alpha\beta} \Tr(L_\alpha L_\beta) + \frac{1}{2}\eta^{\alpha\beta} \omega_\alpha \omega_\beta - \frac{1}{4} \eta^{\alpha\rho} \eta^{\beta\sigma} \omega_{\alpha\beta}\omega_{\rho\sigma} \right\}.
\end{align}
Notice that the Wess-Zumino Lagrangian makes no contribution to this because it does not depend on the metric tensor.
The energy functional is obtained from the temporal part of the energy-momentum tensor.  For the Minkowski metric this is
\begin{align}\label{eq:energy density}
    \mathcal{E} = \frac{m^2}{8} \Tr\left( \Id_2 - \phi \right) - \frac{1}{16} \Tr(L_i L_i+L_0L_0) + \frac{1}{2} \omega_0^2 + \frac{1}{2} \omega_i\omega_i + \frac{1}{2} \omega_{0i} \omega_{0i}+ \frac{1}{4} \omega_{ij} \omega_{ij}.
\end{align}

We are only interested in finding static solutions, so we write $\phi(x,t)=\varphi(x)$, where the map $\varphi: \mathbb{R}^3 \rightarrow \SU(2)$ will now be identified as the Skyrme field.
In particular, we will study this model on the physical space $\mathbb{R}^3$ under the assumption of periodicity with respect to some $3$-dimensional lattice
\begin{equation}
    \Lambda = \left\{ n_1 \vec{X}_1 + n_2 \vec{X}_2 + n_3 \vec{X}_3: n_i \in \mathbb{Z} \right\}.
\end{equation}
We do so by interpreting the domain of the fields $\varphi,\omega$ as $\mathbb{R}^3/\Lambda$, where $(\mathbb{R}^3/\Lambda,d)$ is a $3$-torus equipped with the standard Euclidean metric $d$.  It will prove convenient to identify this domain with the unit $3$-torus by $\mathbb{T}^3\equiv S^1 \times S^1 \times S^1 = \mathbb{R}^3/\mathbb{Z}^3$ via the obvious diffeomorphism
\begin{equation}
    F: \mathbb{T}^3 \rightarrow \mathbb{R}^3/\Lambda, \quad (x_1,x_2,x_3) \mapsto x_1\vec{X}_1 + x_2\vec{X}_2 + x_3\vec{X}_3.
\end{equation}
The Euclidean metric $d$ on $\mathbb{R}^3/\Lambda$ can be identified with the pullback metric $g$ on $\mathbb{T}^3$, i.e.\
\begin{equation}
    g = F^*d = g_{ij} \textup{d}x_i \textup{d}x_j, \quad g_{ij} = \vec{X}_i \cdot \vec{X}_j.
\end{equation}
Varying the lattice $\Lambda$ is then equivalent to varying the flat metric $g$ on $T^3$.

We will write the Skyrme field using pion field notation, that is, we write $\varphi = \varphi_0 \Id_2 + i \varphi_j \tau^j$ where $\tau^j$ are the usual Pauli spin matrices.
Then, we identify $\SU(2)$ with $S^3$ via the isometry
\begin{equation}
    \SU(2) \ni \begin{pmatrix}
        \varphi_0 + i\varphi_3 & i\varphi_1 + \varphi_2 \\
        i\varphi_1 - \varphi_2 & \varphi_0 - i\varphi_3
    \end{pmatrix} \leftrightarrow (\varphi_0,\varphi_1,\varphi_2,\varphi_3) \in S^3,
\label{eq: Quaternion representation}
\end{equation}
The three fields $\varphi_1,\varphi_2,\varphi_3$ are identified with pions, and the field $\varphi_0$ is sometimes referred to as the $\sigma$-field and is constrained by the equation $\varphi_A\varphi_A=1$, where the repeated index is summed over $A=0,1,2,3$.

Since we are only interested in static field configurations, only the temporal component of the topological current remains, i.e. $\mathcal{B}^i=0$.
Consequently, only the temporal component $\omega_0$ of the $\omega$-meson survives, since the topological charge density acts as a source term for the $\omega$ field.
For notational convenience, we will drop the subscript and denote $\omega \equiv \omega_0$.
With these conventions, the static Lagrangian and energy functionals obtained by integrating \eqref{eq:Lagrangian density} and \eqref{eq:energy density} over one period are
\begin{align}
\label{eq:unbounded energy}
    -L(\varphi, \omega,g) &= \int_{\mathbb{T}^3} \textup{d}^3x \sqrt{g} \, \left\{ \frac{1}{4} m^2 (1 - \varphi_0) + \frac{1}{8} g^{ij} \partial_i \varphi_A \partial_j \varphi_A - \frac{1}{2} g^{ij} \partial_i \omega \partial_j \omega - \frac{1}{2} \omega^2 - c_{\omega} \omega \mathcal{B}_0 \right\},\\
\label{eq: Static energy}
    E(\varphi, \omega,g) &= \int_{\mathbb{T}^3} \textup{d}^3x \sqrt{g} \, \left\{ \frac{1}{4} m^2 (1 - \varphi_0) + \frac{1}{8} g^{ij} \partial_i \varphi_A \partial_j \varphi_A + \frac{1}{2} g^{ij} \partial_i \omega \partial_j \omega + \frac{1}{2} \omega^2 \right\}.
\end{align}
Solutions of the Euler-Lagrange equations are critical points of the Lagrangian \eqref{eq:unbounded energy}.
Since this is not bounded from above or below it is not amenable to standard energy-minimisation methods.
Following \cite{Gudnason_2020} we reformulate it using the Euler-Lagrange equation corresponding to temporal $\omega$,
\begin{equation}
\label{eq: Omega field equation}
    \left(-g^{ij} \partial_i \partial_j + 1\right) \omega = -c_{\omega}\mathcal{B}_0.
\end{equation}
This is a linear equation for $\omega$ with a source term proportional to the baryon current.
The $\omega$-meson is completely determined by the Skyrme field $\varphi$ and the domain metric $g$.
Taking the inner product of \eqref{eq: Omega field equation} with $\omega$ and integrating by parts yields
\begin{align}
\label{eq: IBP result}
    \int_{\mathbb{T}^3} \textup{d}^3x \sqrt{g} \, c_{\omega}\omega\mathcal{B}_0 = -\int_{\mathbb{T}^3} \textup{d}^3x \sqrt{g} \, \left( g^{ij} \partial_i \omega \partial_j \omega + \omega^2 \right).
\end{align}
It follows that $-L$ given \eqref{eq:unbounded energy} is equal to the energy \eqref{eq: Static energy} when $\omega$ satisfies the constraint \eqref{eq: Omega field equation}.
%Then the energy \eqref{eq:unbounded energy} can be expressed as
%\begin{equation}
%\label{eq: Static energy}
%    E(\varphi,g) = \int_{\mathbb{T}^3} \textup{d}^3x \sqrt{g} \, \left\{ \frac{m^2}{4} (1 - \varphi_0) + \frac{1}{8} g^{ij} \partial_i \varphi_A \partial_j \varphi_A + \frac{1}{2} g^{ij} \partial_i \omega \partial_j \omega + \frac{1}{2} \omega^2 \right\},
%\end{equation}
%which is bounded below by zero.
Extremising the unbounded functional \eqref{eq:unbounded energy} with respect to variations $\varphi$ and $\omega$ is equivalent to extremising the bounded energy \eqref{eq: Static energy} subject to the constraint \eqref{eq: Omega field equation}.

We wish to find skyrmion crystals, i.e.\ static periodic solutions $\varphi,\omega$ of the Euler-Lagrange equations whose energy is minimized with respect to variations of the period lattice $\Lambda$.  We do so by minimizing \eqref{eq: Static energy} with respect to variations in $\varphi,g$, with $\omega$ being determined by the constraint \eqref{eq: Omega field equation}.  We will describe a numerical method for doing so in the next section.

%MOVE THIS TO LATER:
%Static configurations $(\varphi,\omega)$ in this theory are only solutions of the Euler--Lagrange equations for the Lagrangian \eqref{eq: Omega Lagrangian} if and only if they are critical points of the static energy functional \eqref{eq: Static energy} and satisfy the constraint \eqref{eq: Omega field equation}.
%That is, they must satisfy \eqref{eq: Omega field equation} and the field equations corresponding to the Skyrme field,

%%%%%%%%%%%%%%%%%%%%%%%%%%%%%%%%%%%%%%%%%%%%%%%%%
% TOPOLOGICAL ENERGY BOPUND
%%%%%%%%%%%%%%%%%%%%%%%%%%%%%%%%%%%%%%%%%%%%%%%%%

Before moving on, it is interesting to note that the energy \eqref{eq: Static energy} subject the constraint \eqref{eq: Omega field equation} obeys a topological energy bound.  In fact, the bound is valid in the more general setting of maps $\varphi: M \rightarrow N$ between compact Riemannian $3$-manifolds $(M^3,g)$ and $(N^3,h)$, so we reformulate the energy in this more general setting.  We have a functional given by
\begin{equation}
\label{eq: Main energy functional}
E(\varphi,g) = \int_{M} \left( \frac{1}{8} \left|\textup{d}\varphi\right|^2_g + \frac{1}{4}(V\circ\varphi) 
%\right) \textup{vol}_g + \int_{M} \left(
+ \frac{1}{2}\left|\textup{d}\omega\right|^2_g + \frac{1}{2}\omega^2 \right) \textup{vol}_g,
\end{equation}
subject to the constraint
\begin{equation}
\label{eq: General constraint}
    \left( \Delta_g + 1 \right) \omega = - c_{\omega} \ast \varphi^* \Omega,
\end{equation}
where $\Omega$ is the normalized volume form on $N$, i.e.\
\begin{equation}
    \Omega = \frac{\textup{vol}_h}{|N|}.
\end{equation}

\begin{proposition}\label{prop:bound}
The energy \eqref{eq: Main energy functional} subject to the constraint \eqref{eq: General constraint} satisfies the topological energy bound,
\begin{equation}
    E \geq \frac{B^2 c_{\omega}^2}{2|M|},
\end{equation}
in which $B$ is the topological charge (i.e.\ degree) of $\varphi:M\to N$ and $|M|$ is the volume of $M$.
\end{proposition}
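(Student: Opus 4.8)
The plan is to discard the manifestly non-negative pieces of the energy and extract the entire bound from the constraint alone. Since $\frac{1}{8}|\textup{d}\varphi|^2_g$, the potential term $\frac14(V\circ\varphi)$ (which is non-negative; in the model $V\circ\varphi=m^2(1-\varphi_0)\geq0$) and $\frac12|\textup{d}\omega|^2_g$ are all pointwise non-negative, I would first throw them away to obtain the crude lower bound
\begin{equation}
E(\varphi,g) \geq \frac12\int_M \omega^2\,\textup{vol}_g .
\end{equation}
The whole proof then reduces to bounding the spatial $L^2$ norm of $\omega$ from below, and, somewhat surprisingly, this single surviving term already produces the sharp constant.

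Second, I would integrate the constraint \eqref{eq: General constraint} over $M$. Because $M$ is closed, $\int_M\Delta_g\omega\,\textup{vol}_g=0$ by the divergence theorem, so the left-hand side collapses to $\int_M\omega\,\textup{vol}_g$. On the right, $\ast\varphi^*\Omega$ is the function representing the top-form $\varphi^*\Omega$, whence $\int_M(\ast\varphi^*\Omega)\,\textup{vol}_g=\int_M\varphi^*\Omega=B\int_N\Omega=B$ by the definition of the degree together with the normalization $\int_N\Omega=1$. This yields the key identity
\begin{equation}
\int_M \omega\,\textup{vol}_g = -c_\omega B ,
\end{equation}
i.e.\ the mean value of $\omega$ is pinned to $-c_\omega B/|M|$ irrespective of the Skyrme configuration.

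Third, I would apply Cauchy--Schwarz to the pairing of $\omega$ with the constant function $1$,
\begin{equation}
\left(\int_M\omega\,\textup{vol}_g\right)^2 \leq |M|\int_M\omega^2\,\textup{vol}_g ,
\end{equation}
so that $\int_M\omega^2\,\textup{vol}_g \geq (c_\omega B)^2/|M|$. Combining this with the first step gives $E\geq c_\omega^2 B^2/(2|M|)$, exactly as claimed.

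The argument is short and I do not anticipate a genuine obstacle; the only points requiring care are the two appeals in the second step, namely that $\int_M\Delta_g\omega\,\textup{vol}_g$ vanishes (which is where compactness without boundary is essential) and that the Hodge dual of $\varphi^*\Omega$ integrates to the degree $B$. It is worth noting that the bound is saturated only when $\omega$ is constant and all discarded terms vanish; for a genuine skyrmion of non-zero degree these conditions are incompatible, so the inequality, while clean, is not expected to be attained. An equivalent route, should one wish to retain the gradient term, is to write $E_\omega=\tfrac12\langle\omega,\omega\rangle_H$ in the inner product $\langle f,h\rangle_H=\int_M f(\Delta_g+1)h\,\textup{vol}_g$, use the constraint to identify $\omega=-c_\omega(\Delta_g+1)^{-1}(\ast\varphi^*\Omega)$, and apply Cauchy--Schwarz in $\langle\cdot,\cdot\rangle_H$ against the constant test function; this reproduces the same constant.
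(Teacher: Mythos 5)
Your argument is correct and is essentially identical to the paper's own proof: both discard the non-negative terms to reduce to $E\geq\frac12\int_M\omega^2\,\mathrm{vol}_g$, integrate the constraint to pin $\int_M\omega\,\mathrm{vol}_g=-c_\omega B$, and finish with Cauchy--Schwarz against the constant function. No gaps.
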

\begin{proof}
Let us define $\mathcal{B}=\ast \varphi^* \Omega$ such that $\varphi^* \Omega = \mathcal{B}\, \textup{vol}_{g}$.
Then, from the $\omega$-meson constraint \eqref{eq: General constraint}, the topological charge can be expressed as
\begin{equation}
    B = \int_{M} \varphi^* \Omega = -\frac{1}{c_{\omega}} \int_{M} \left( \Delta_g + 1 \right) \omega \, \textup{vol}_g = -\frac{1}{c_{\omega}} \int_{M} \omega \, \textup{vol}_g.
\end{equation}
Using the Cauchy--Schwartz inequality, we obtain the following relation
\begin{equation}
    B^2 = \frac{1}{c_{\omega}^2} \left( \int_{M} \omega \, \textup{vol}_g \right)^2 \leq \frac{1}{c_{\omega}^2} \left( \int_{M} \omega^2 \, \textup{vol}_g \right) \left( \int_{M} 1 \, \textup{vol}_g \right) = \frac{|M|}{c_{\omega}^2} \int_{M} \omega^2 \, \textup{vol}_g.
\end{equation}
With this, we can derive a simple lower topological bound on the static energy \eqref{eq: Main energy functional}, that is,
\begin{equation}
    E \geq \frac{1}{2} \int_{M} \omega^2 \, \textup{vol}_g \geq \frac{B^2 c_{\omega}^2}{2|M|}
\end{equation}
\end{proof}

For the particular case of interest, $M=\mathbb{T}^3$ with flat metric given by a matrix $g$, the bound is
\begin{equation}\label{eq:bound}
    E \geq E_{\textup{bound}} = \frac{B^2 c_{\omega}^2}{2\sqrt{g}}.
\end{equation}

%%%%%%%%%%%%%%%%%%%%%%%%%%%%%%%%%%%%%%%%%%%%%%%%%
% STRESS-ENERGY TENSOR
%%%%%%%%%%%%%%%%%%%%%%%%%%%%%%%%%%%%%%%%%%%%%%%%%

\section{Energy minimization and the stress-energy tensor}
\label{sec:Stress-energy tensor}

We now turn to the problem of constructing skyrmion crystals, i.e.\ minimizing the energy \eqref{eq: Static energy} with respect to variations in $\varphi$ and $g$.  We do this numerically, using arrested Newton flow.  This algorithm works by solving Newton's equations of motion for the energy $E$, written formally as:
\begin{equation}\label{eq: Newton flow}
    \frac{\textup{d}^2}{\textup{d}t^2}(\varphi_A,g_{ij}) = -\nabla E.
\end{equation}
Initial conditions are chosen such that $\frac{\textup{d}}{\textup{d}t}(\varphi_A,g_{ij})=0$.
These ensure that the flow reduces energy at early times.
If at any later time the energy begins to increase, the flow is arrested and the velocities $\frac{\textup{d}}{\textup{d}t}(\varphi_A,g_{ij})$ are set to zero.
The flow then resumes from the same position.
It is deemed to have converged when $\nabla E$ is sufficiently small.

We recall that $\omega$ appearing in the energy functional \eqref{eq: Static energy} depends on $\varphi$ and $g$ through the constraint \eqref{eq: Omega field equation}.
Thus computing $E$ and its gradient entails computing $\omega$ at each time step.
As in \cite{Gudnason_2020}, this is accomplished using a conjugate gradient method.
The constraint \eqref{eq: Omega field equation} means that the metric-dependence of the energy is much more complicated than in the standard Skyrme model.
As a result, the algorithm described here is slightly different from the algorithm used earlier to find crystals in the standard Skyrme model \cite{Leask_2023}.

The gradient on the right hand side of \eqref{eq: Newton flow} is understood using the calculus of variations.
We write $\nabla E=(\Phi_A,S_{ij})$, in which $\Phi_A$ and $S_{ij}$ are defined by
\begin{equation}
    \frac{\textup{d}}{\textup{d}s}E(\varphi_s,g_s)\bigg|_{s=0} = \int_{\mathbb{T}^3} \textup{d}^3x \sqrt{g} \,\left(  \Phi_A(\varphi,g)\dot{\varphi}_A + S_{ij}(\varphi,g)\dot{g}_{kl}g^{jk}g^{li}\right)
\end{equation}
for all one-parameter variations $\varphi_s,g_s$ with $(\varphi_0,g_0)=(\varphi,g)$ and $\frac{\textup{d}}{\textup{d}s}(\varphi_s,g_s)=(\dot{\varphi},\dot{g})$ at $s=0$.
The calculation of $\Phi_A$ and $S_{ij}$ is delicate, because $\omega$ appearing in \eqref{eq: Static energy} depends on $\varphi$ and $g$ implicitly through the constraint \eqref{eq: Omega field equation}.
Using results of \cite{Gudnason_2020}, $\Phi_A$ is given (in the case of flat metrics on $\mathbb{T}^3$) by
\begin{equation}
\label{eq: Euler-Lagrange equations}
    \Phi_A = - \frac{1}{4}(\delta_{AB}-\varphi_A\varphi_B) (m^2 \delta_{0B} + g^{ij} \partial_i\partial_j \varphi_B) + \frac{c_{\omega}}{4\pi^2 \sqrt{g}} \epsilon_{ijk}\epsilon_{ABCD} \varphi_B \partial_i \omega \partial_j \varphi_C \partial_k \varphi_D,
\end{equation}
where $A,B,C,D=0,1,2,3$.
This coincides with the Euler--Lagrange equation of the original unconstrained energy functional \eqref{eq:unbounded energy}.
The stress-energy tensor $S_{ij}$ is computed in the following proposition, formulated in the general setting of maps $\varphi:(M,g)\to(N,h)$ between Riemannian 3-manifolds.

\begin{proposition}
The stress-energy tensor $S=S_{ij}\textup{d}x^i\textup{d}x^j$ associated to the energy \eqref{eq: Main energy functional} subject to the constraint \eqref{eq: General constraint} is the section of $\mathrm{Sym}^2(T^\ast M)$ given by
\begin{equation}
\label{eq: Stress-energy tensor}
    S(\varphi,g) = \left( \frac{1}{16}|\textup{d}\varphi|^2_g + \frac{1}{8}(V\circ\varphi) -  \frac{1}{4} |\textup{d}\omega|^2_g -\frac{1}{4} \omega^2 \right) g - \left(\frac{1}{8} \varphi^*h - \frac{1}{2} \textup{d}\omega\otimes\textup{d}\omega \right).
\end{equation}
\end{proposition}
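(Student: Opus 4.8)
The plan is to sidestep the implicit dependence of $\omega$ on the metric by exploiting that $\omega$ is fixed by its own Euler--Lagrange equation. Recall from \eqref{eq: IBP result} and the surrounding discussion that, once the constraint \eqref{eq: General constraint} holds, the constrained energy $E(\varphi,g)$ coincides with $-L(\varphi,\omega,g)$, where
\[
-L(\varphi,\omega,g)=\int_M\left(\tfrac18|\textup{d}\varphi|^2_g+\tfrac14(V\circ\varphi)-\tfrac12|\textup{d}\omega|^2_g-\tfrac12\omega^2-c_\omega\,\omega\,\ast\varphi^*\Omega\right)\textup{vol}_g
\]
is regarded as a functional of three independent arguments $\varphi,\omega,g$. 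The key observation is that the constraint \eqref{eq: General constraint} is precisely the statement $\partial_\omega(-L)=0$: integrating $\tfrac{\textup{d}}{\textup{d}s}(-L)$ by parts in the $\omega$ direction reproduces $(\Delta_g+1)\omega+c_\omega\ast\varphi^*\Omega=0$. Hence $\omega=\omega(\varphi,g)$ is a critical point of $-L$ with respect to variations of $\omega$ alone.

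By the envelope (Hellmann--Feynman) theorem, the total metric derivative of the value function then equals its partial derivative at fixed $\omega$:
\[
\frac{\textup{d}}{\textup{d}s}E(\varphi,g_s)\Big|_{s=0}=\frac{\partial(-L)}{\partial\omega}[\dot\omega]+\frac{\partial(-L)}{\partial g}[\dot g]=\frac{\partial(-L)}{\partial g}[\dot g],
\]
since $\partial_\omega(-L)=0$ annihilates the implicit variation $\dot\omega=\tfrac{\textup{d}}{\textup{d}s}\omega(\varphi,g_s)$. This reduces the problem to differentiating only the \emph{explicit} metric dependence of $-L$, holding $\varphi$ and $\omega$ fixed.

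I would then assemble $S$ from the standard first-variation identities $\tfrac{\textup{d}}{\textup{d}s}\textup{vol}_{g_s}=\tfrac12 g^{ij}\dot g_{ij}\,\textup{vol}_g$ and $\tfrac{\textup{d}}{\textup{d}s}g^{ij}=-g^{ik}g^{jl}\dot g_{kl}$, applied term by term. A Dirichlet-type term $\int c\,g^{ij}T_{ij}\,\textup{vol}_g$ with $T$ a fixed symmetric $2$-tensor contributes $c\big(\tfrac12(g^{ij}T_{ij})\,g-T\big)$ to $S$; taking $c=\tfrac18,\ T=\varphi^*h$ and $c=-\tfrac12,\ T=\textup{d}\omega\otimes\textup{d}\omega$ yields $\tfrac1{16}|\textup{d}\varphi|^2_g\,g-\tfrac18\varphi^*h$ and $-\tfrac14|\textup{d}\omega|^2_g\,g+\tfrac12\textup{d}\omega\otimes\textup{d}\omega$. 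The pure potential pieces $\int(\tfrac14 V-\tfrac12\omega^2)\textup{vol}_g$ vary only through the volume form, producing $(\tfrac18 V-\tfrac14\omega^2)g$. Finally, the Wess--Zumino coupling $\int c_\omega\,\omega\,\ast\varphi^*\Omega\,\textup{vol}_g=\int c_\omega\,\omega\,\varphi^*\Omega$ is manifestly metric-independent at fixed $\omega,\varphi$ (as already noted for the covariant $T_{\mu\nu}$), so it drops out. Collecting the pieces gives exactly \eqref{eq: Stress-energy tensor}, the sign of the $\omega$-terms originating precisely from the negative signs carried by those terms in $-L$.

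The main obstacle is justifying the envelope step, i.e.\ that the implicit variation $\dot\omega$ genuinely contributes nothing. I would make this rigorous by noting that the constraint \eqref{eq: General constraint} is a linear elliptic equation whose operator $\Delta_g+1$ is invertible (strictly positive, since the $+1$ excludes a zero mode), so $\omega(\varphi,g)$ depends smoothly on $g$ and $\dot\omega$ is well defined; its vanishing contribution is then exactly the identity $\partial_\omega(-L)[\dot\omega]=0$. Everything else is a routine index computation. As a consistency check I would also carry out the direct route---varying the constrained $E$ itself and evaluating $\dot\omega$ by differentiating the constraint---and verify that the resulting implicit terms reorganize into the same expression \eqref{eq: Stress-energy tensor}, thereby confirming the sign flip of the $\omega$-dependent terms.
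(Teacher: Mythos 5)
Your proof is correct, but it takes a genuinely different route from the paper's. The paper computes the variation of the constrained energy directly: it rewrites the $\omega$-terms as $-\tfrac{c_\omega}{2}\int_M\omega\,\varphi^*\Omega$, tracks the implicit variation $\dot\omega$ by differentiating the constraint \eqref{eq: General constraint} (which brings in the variation of the Hodge star and of the Laplacian), uses self-adjointness of $\Delta_g+1$, and then eliminates $\dot{\Delta}_g\omega$ via a variational identity for $\int_M f\Delta_g f\,\mathrm{vol}_g$. You instead observe that $E(\varphi,g)=-L(\varphi,\omega(\varphi,g),g)$ with $\omega(\varphi,g)$ a critical point of $-L$ in the $\omega$-direction, so the envelope theorem kills the $\dot\omega$ contribution outright and only the explicit metric dependence of $-L$ survives; the rest is the standard term-by-term computation for Dirichlet-type and potential terms, with the Wess--Zumino coupling dropping out because $\omega\,\varphi^*\Omega$ is a metric-independent $3$-form. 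Your justification of the envelope step (invertibility of $\Delta_g+1$, hence smooth dependence of $\omega$ on $g$) is exactly what is needed. Your argument is shorter and has the added virtue of \emph{explaining} the remark following the proposition --- that $S$ coincides with the stress tensor of the unconstrained functional \eqref{eq:unbounded energy} --- which in the paper's direct calculation emerges only as an a posteriori coincidence; the paper's route, on the other hand, is self-contained at the level of explicit variational identities and doubles as the consistency check you propose at the end.
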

Note that in local coordinates the formula \eqref{eq: Stress-energy tensor} gives
\begin{multline}
    S_{ij} = \left( \frac{1}{16}g^{mn} \partial_m\varphi_A\partial_n\varphi_A + \frac{1}{8}m^2\left(1-\varphi_0\right) - \frac{1}{4}g^{mn}\partial_m\omega\partial_n\omega - \frac{1}{4}\omega^2 \right) g_{ij}  \\
    - \frac{1}{8} \partial_i\varphi_A\partial_j\varphi_A + \frac{1}{2} \partial_i\omega\partial_j\omega.
\end{multline}
This coincides with the stress tensor for the original unconstrained energy functional \eqref{eq:unbounded energy}.
\begin{proof}
Let us introduce the notation $\langle A,B\rangle_g=A_{ij}B_{kl}g^{ik}g^{jl}$ for the natural inner product of two-tensors $A=A_{ij}\textup{d}x^i\textup{d}x^j$, $B=B_{kl}\textup{d}x^k\textup{d}x^l$.  The variation of the inverse metric and the volume form are given by
\begin{equation}
    \left.\frac{\textup{d}}{\textup{d}s}\right|_{s=0} g^{ij}(s) = - g^{ik}\dot{g}_{kl} g^{lj},\quad 
    \left.\frac{\textup{d}}{\textup{d}s}\right|_{s=0} \textup{vol}_{g_s} = \frac{1}{2}\langle g,\dot g\rangle_g \textup{vol}_g.
\end{equation}
These lead to the standard result for the first variation of the terms in \eqref{eq: Main energy functional} involving $\varphi$:
\begin{equation}
\label{eq: Phi variation}
\left.\frac{d}{ds}\int_{M} \left( \frac{1}{8} \left|\textup{d}\varphi\right|^2_g + \frac{1}{4}(V\circ\varphi) \right) \textup{vol}_g\right|_{s=0}=
\int_{M} \left\langle \frac{1}{16}|\textup{d}\varphi|^2_g g + \frac{1}{4}(V\circ\varphi)g - \frac{1}{8} \varphi^*h\,,\, \dot{g}\right\rangle_g \textup{vol}_g.
\end{equation}
It remains to compute the first variation of the terms in \eqref{eq: Main energy functional} involving $\omega$, which are more conveniently written using the constraint \eqref{eq: General constraint}:
\begin{equation}
    E^{\omega}(\omega,g) = \int_{M} \left( \frac{1}{2}\left|\textup{d}\omega\right|^2_g + \frac{1}{2}\omega^2 \right) \textup{vol}_g = - \frac{c_{\omega}}{2} \int_{M} \omega \varphi^*\Omega.
\end{equation}
Since the pullback $\varphi^*\Omega \in \Omega^3(M)$ is $g$-independent, the first variation of this with respect to the metric $g_s$ is given by
\begin{equation}
\label{eq: First omega variation}
    \left.\frac{\textup{d} E^{\omega}(\omega_s, g_s)}{\textup{d}s}\right|_{s=0}
    = -\frac{c_{\omega}}{2} \int_{M} \dot{\omega} \varphi^*\Omega
    = \frac{1}{2} \int_{M} \dot{\omega}\left( \Delta_g + 1 \right) \omega\,\textup{vol}_g 
    = \frac{1}{2} \int_{M} {\omega}\left( \Delta_g + 1 \right) \dot{\omega}\,\textup{vol}_g,
\end{equation}
where we have denoted $\dot{\omega}=\left.\frac{\textup{d}}{\textup{d}s}\right|_{s=0}\omega_s$.
This can be simplified as follows.
Consider the variation of the Hodge star operator $*_g: \Omega^3(M) \rightarrow \Omega^0(M)$,
\begin{equation}
\label{eq: Hodge star variation}
    \left.\frac{\textup{d}}{\textup{d}s}\right|_{s=0}*_{g_s} = -\frac{1}{2}\braket{g, \dot{g}}_g *_g,
\end{equation}
and define $\dot{\Delta}_g = \left.\frac{\textup{d}}{\textup{d}s}\right|_{s=0}\Delta_{g_s}$.
Then varying the $\omega$-meson constraint \eqref{eq: General constraint} and using \eqref{eq: Hodge star variation} yields
\begin{equation}
    \left( \Delta_g + 1 \right) \dot{\omega} = -\dot{\Delta}_g \omega + \frac{c_{\omega}}{2}\langle g,\dot g\rangle_g *_g \varphi^*\Omega
\end{equation}
Hence the first variation \eqref{eq: First omega variation} becomes
\begin{equation}
\label{eq: Omega variation reduced}
    \left.\frac{\textup{d} E^{\omega}(\omega_s, g_s)}{\textup{d}s}\right|_{s=0}
    = \frac{c_\omega}{4} \int_{M}\langle g,\dot g\rangle_g\omega\,\phi^\ast\Omega - \frac{1}{2} \int_{M} {\omega} \dot{\Delta}_g{\omega}\,\textup{vol}_g.
\end{equation}
%We only need to deal with the first part of \eqref{eq: Omega variation reduced} as the second part can be re-expressed via \eqref{eq: General constraint},
%\begin{equation}
%    \frac{c_{\omega}}{4}\braket{\omega g *_g \varphi^*\Omega, \delta g}_{L^2(g)} = -\frac{1}{4}\braket{\omega \left( \Delta_g + 1 \right) \omega g, \delta g}_{L^2(g)} = -\frac{1}{4} \braket{ \left( \omega \Delta_g \omega \right) g + \omega^2 g, \delta g }_{L^2(g)}.
%\end{equation}
To simplify the second term, we vary the identity,
\begin{equation}
    \int_{M} f ( \Delta_{g_s} f ) \textup{vol}_{g_s} = \int_{M} \braket{\textup{d}f,\textup{d}f}_{g_s} \textup{vol}_{g_s},
\end{equation}
to obtain
\begin{equation}
\label{eq: Laplacian variation relation}
\int_{M} \left\{ f \dot{\Delta}_g f + \frac{1}{2} f(\Delta_g f) \langle g,\dot{g}\rangle_g \right\} \textup{vol}_g 
    = \int_{M} \left\{ -\langle \textup{d}f\otimes \textup{d}f,\dot{g}\rangle_g + \frac{1}{2}|\textup{d}f|^2_g \langle g,\dot{g}\rangle_g \right\} \textup{vol}_g,
\end{equation}
valid for all smooth functions $f$.  Using particular case $f=\omega$ of \eqref{eq: Laplacian variation relation} and the constraint \eqref{eq: General constraint}, \eqref{eq: Omega variation reduced} rearranges to
%Thus, 
%\begin{equation}
%    \int_{M} \left(\omega \dot{\Delta}_g \omega \right) \textup{vol}_g = \int_{M} \left\{ -\frac{1}{2} \omega \Delta_g \omega \braket{g, \delta g}_g - \braket{\textup{d}\omega \otimes \textup{d}\omega, \delta g}_g + \frac{1}{2}|\textup{d}\omega|^2_g \braket{g,\delta g}_g \right\} \textup{vol}_g.
%\end{equation}
%Finally, substituting \eqref{eq: Laplacian variation relation} into \eqref{eq: Omega variation reduced}, we obtain the first variation of the $\omega$-energy functional \eqref{eq: Omega energy part},
\begin{equation}
    \left.\frac{\textup{d} E^{\omega}(\omega_s, g_s)}{\textup{d}s}\right|_{s=0} = \int_{M} \left\langle \frac{1}{2} \textup{d}\omega\otimes\textup{d}\omega - \frac{1}{4}|\textup{d}\omega|^2_g g - \frac{1}{4}\omega^2 g\,,\, \dot{g} \right\rangle_g \textup{vol}_g.
\end{equation}
Combining this with \eqref{eq: Phi variation}, the variation of $E$ takes the form $\int_M\langle S,\dot g\rangle_g\textup{vol}_g$, with $S$ given in \eqref{eq: Stress-energy tensor}.
%with the corresponding stress-energy tensor given by
%\begin{equation}
%\label{eq: Omega stress tensor}
%    S^\omega(\omega,g) = \frac{1}{2} \textup{d}\omega\otimes\textup{d}\omega - \frac{1}{4} \left(|\textup{d}\omega|^2_g + \omega^2\right)g.
%\end{equation}
%
%Combining the Skyrme and $\omega$ stress-energy tensors (\eqref{eq: Skyrme stress tensor} and \eqref{eq: Omega stress tensor}), we arrive at the stress-energy tensor \eqref{eq: Stress-energy tensor} corresponding to the first variation of the energy functional \eqref{eq: Main split energy functional} with respect to the metric $g_s$, as required.
\end{proof}

\section{Skyrmion crystals coupled to $\omega$-mesons}
\label{sec:Results}

\begin{figure}[t]
	\centering
	\begin{subfigure}[b]{0.45\textwidth}
	\includegraphics[width=\textwidth]{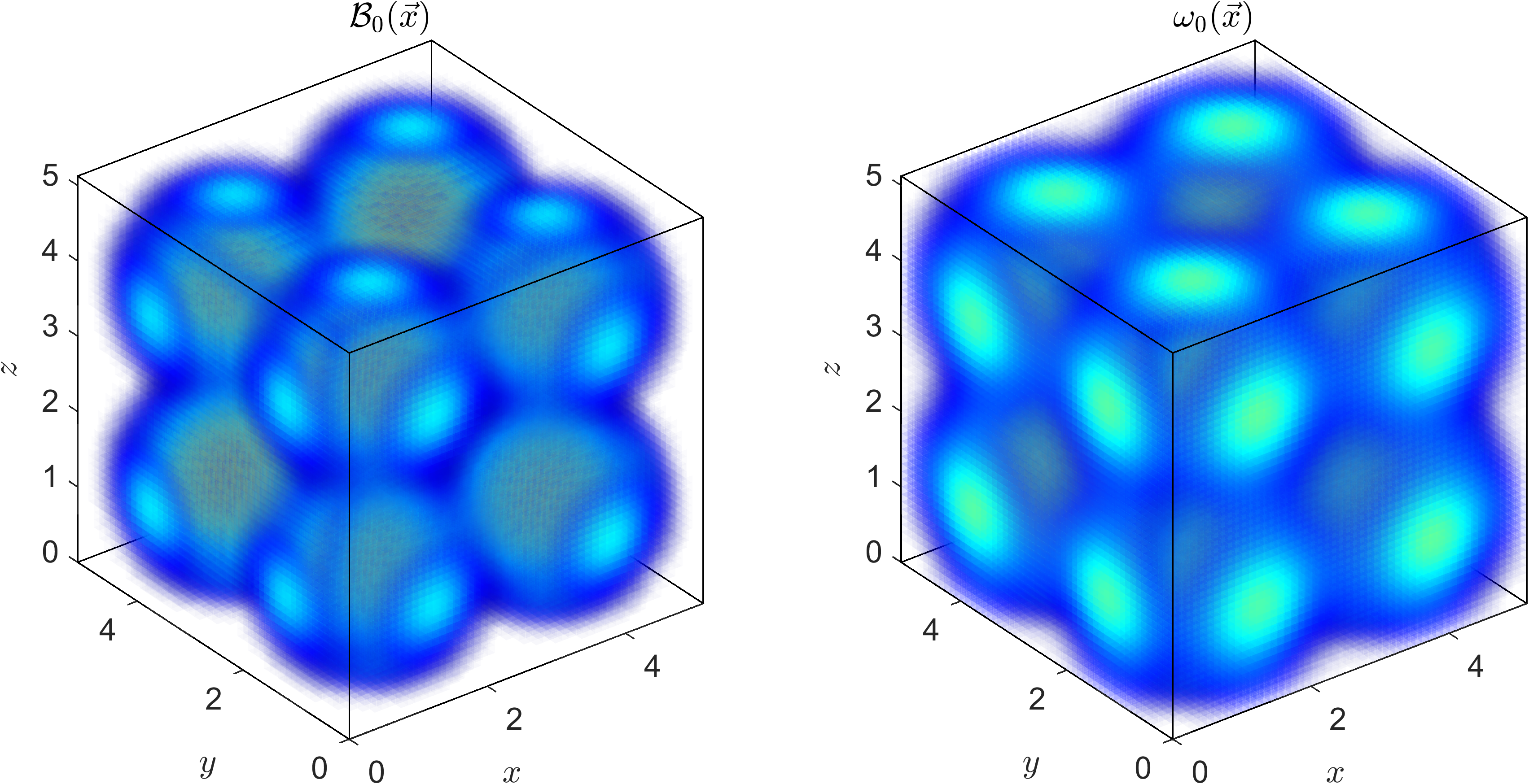}
	\caption{$\textup{SC}_{1/2}$ crystal}
	\label{fig: FCC}
	\end{subfigure}
	~
	\begin{subfigure}[b]{0.45\textwidth}
	\includegraphics[width=\textwidth]{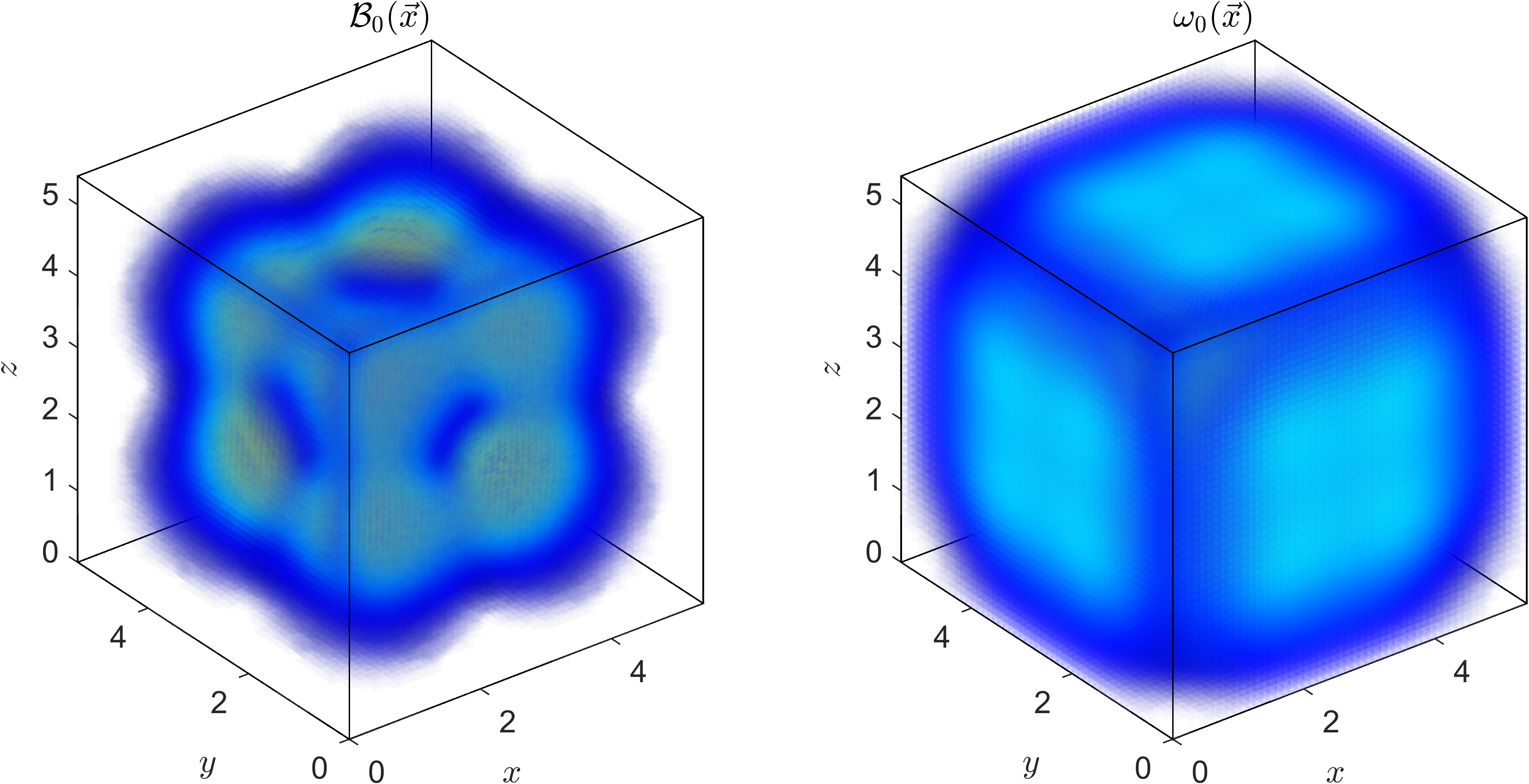}
	\caption{$\alpha$ crystal}
	\label{fig: Alpha}
	\end{subfigure} \\
	\begin{subfigure}[b]{0.45\textwidth}
	\includegraphics[width=\textwidth]{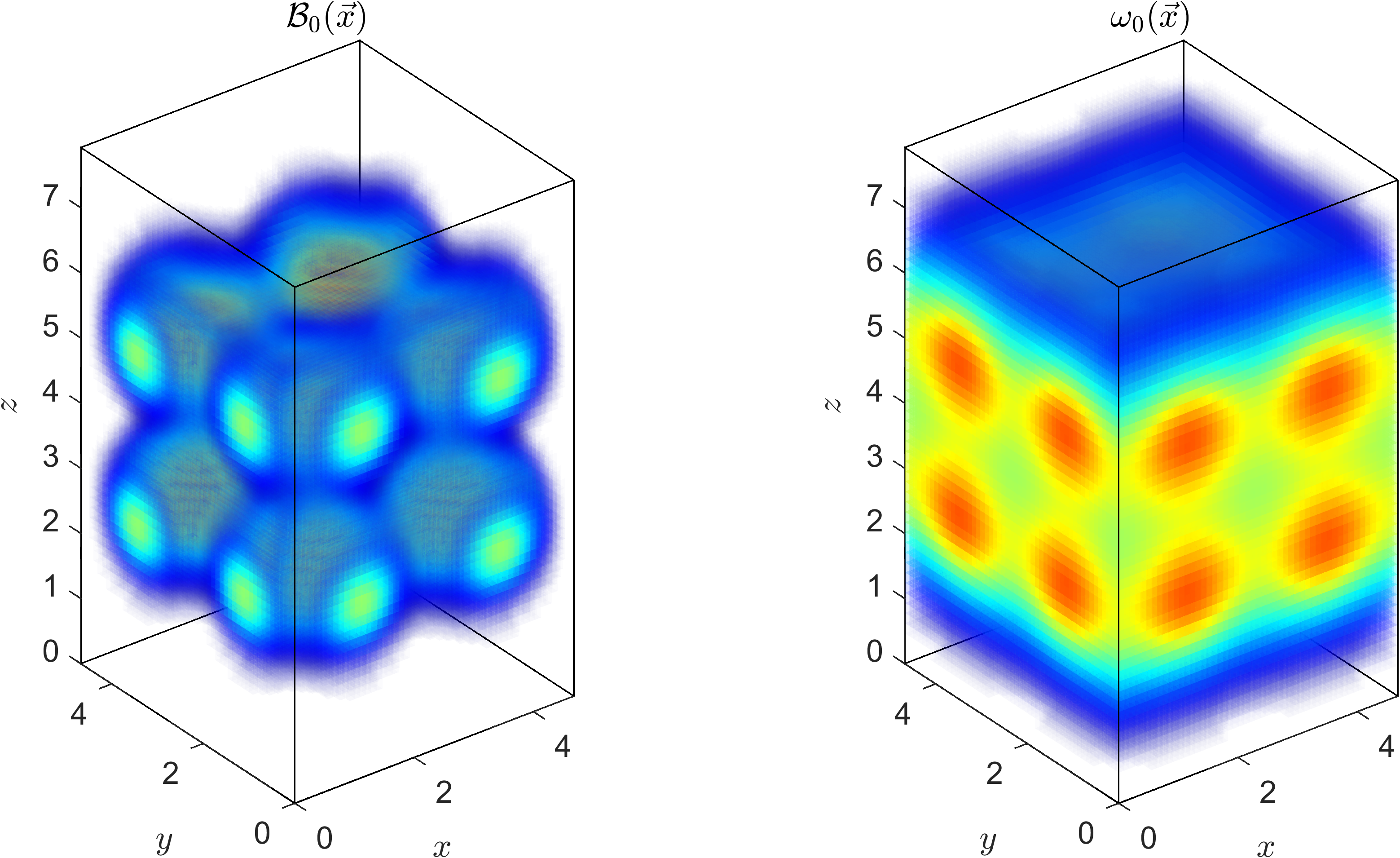}
	\caption{multiwall crystal}
	\label{fig: Multiwall}
	\end{subfigure}
	~
	\begin{subfigure}[b]{0.45\textwidth}
	\includegraphics[width=\textwidth]{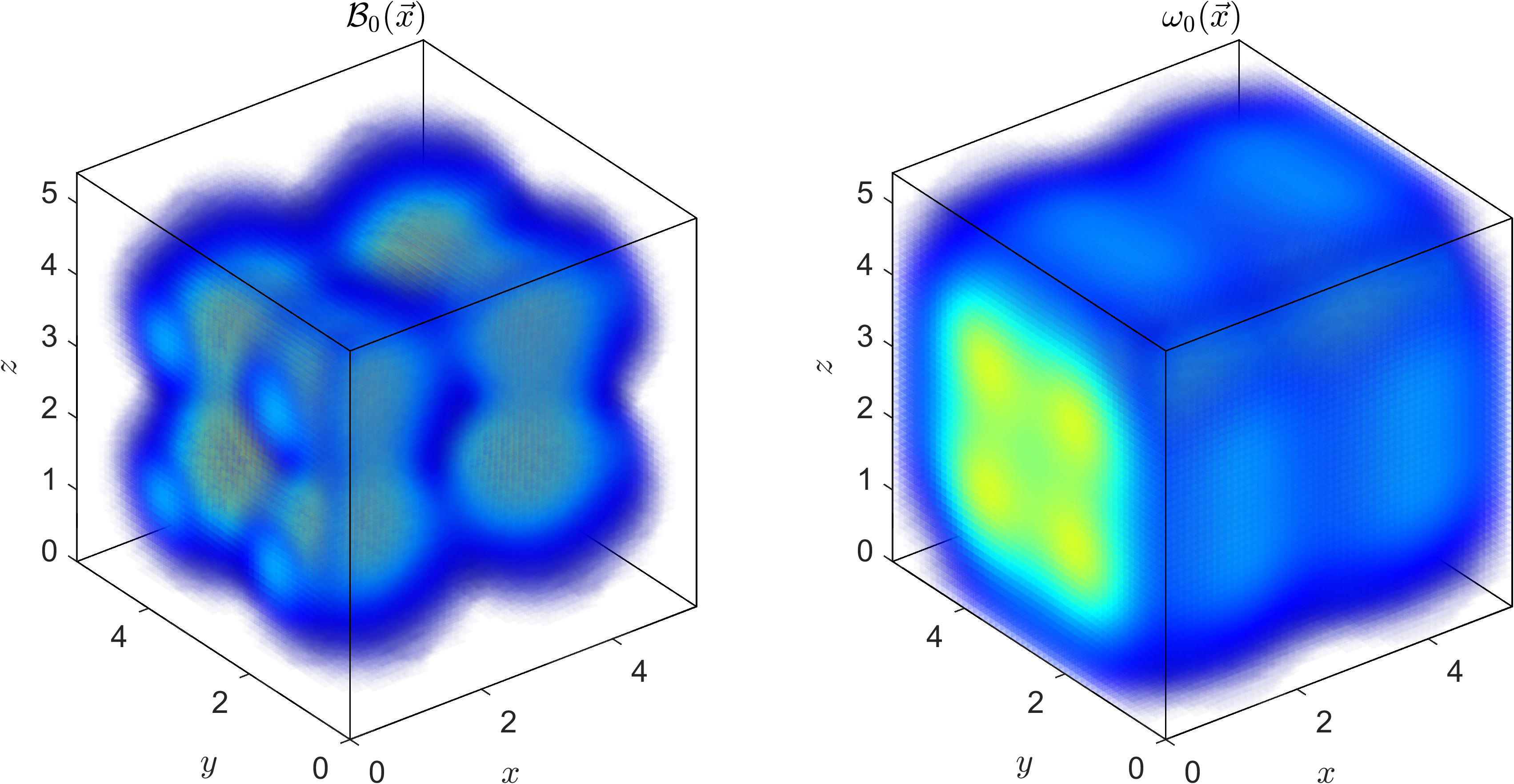}
	\caption{chain crystal}
	\label{fig: Chain}
	\end{subfigure}
	\caption{Baryon density $\mathcal{B}_0(\vec{x})$ and omega density $\omega_0(\vec{x})$ plots of the four crystalline solutions for the coupling constant $c_{\omega} = 14.34$.}
	\label{fig: Crystal densities}
\end{figure}

The previous sections have described our numerical algorithm that constructs skyrmion crystals by relaxing a choice of initial configuration.
We now present the crystals obtained using this algorithm.
As in \cite{Leask_2023}, our initial configurations are based on the $\textup{SC}_{1/2}$ crystal in the standard Skyrme model with no pion mass and no $\omega$-mesons.
We use the approximate solution $\varphi^\text{app.}$ of Castillejo \textit{et al.} \cite{Castillejo_1989},
\begin{equation}
    \varphi^\text{app.}_0  = - c_1 c_2 c_3, \quad \varphi^\text{app.}_1 = s_1 \sqrt{1-\frac{s_2^2}{2}-\frac{s_3^2}{2}+\frac{s_2^2 s_3^2}{3}},
\label{eq: 1/2 crystal Castillejo}
\end{equation}
with $s_i=\sin 2\pi x_i$, $c_i=\cos 2\pi x_i$, and $\varphi^\text{app.}_2,\varphi^\text{app.}_3$ obtained by cyclic permutation.
This defines a Skyrme field on $\mathbb{R}^3/\mathbb{Z}^3$ with $B=4$.
As in \cite{Leask_2023}, we generate a range of initial conditions $Q\varphi^\text{app.}$ using an $\SO(4)$ matrix $Q$.
The four specific choices that we make for $Q$ are:
\begin{equation}
\begin{aligned}
Q_{1/2}&=\mathrm{Id},&
Q_{\alpha}&=\frac{1}{\sqrt{3}}\begin{pmatrix}\begin{array}{cccc}0&1&1&1\end{array}\\ \ast \end{pmatrix}\\
Q_{\mathrm{multiwall}}&=\begin{pmatrix}\begin{array}{cccc}0&0&0&1\end{array}\\ \ast \end{pmatrix},&
Q_{\mathrm{chain}}&=\frac{1}{\sqrt{2}}\begin{pmatrix}\begin{array}{cccc}0&0&1&1\end{array}\\ \ast \end{pmatrix},
\end{aligned}
\label{eq: Chiral Q transformations}
\end{equation}
with the remaining rows (denoted by an asterisk) being determined by the Gram-Schmidt process.
These choices are motivated by the principle of symmetric criticality \cite{Leask_2023}.
The initial metric is given by $g_{ij}=L^2\delta_{ij}$ for suitably chosen $L$.
Following \cite{Gudnason_2020}, we set the initial configuration for the $\omega$-meson to be $\omega = -c_{\omega}\mathcal{B}_0$.
The resulting initial conditions are invariant under distinct subgroups of the symmetry group of the energy functional, so flow to distinct critical points.

The energy \eqref{eq: Static energy} and constraint \eqref{eq: Omega field equation} involve two dimensionless parameters: $c_\omega$ and $m$.
We used three different parameter choices that have been proposed in the literature \cite{Nappi_1984,Sutcliffe_2009,Gudnason_2020}.
Adkins and Nappi \cite{Nappi_1984} chose the value $c_\omega=98.4$ by fitting the masses of the nucleon and the delta resonance.
Sutcliffe \cite{Sutcliffe_2009} chose the value $c_\omega=34.7$ by fitting the pion decay constant $F_\pi$ and the mass of helium-4 to their experimental values.
Finally, Gudnason and Speight \cite{Gudnason_2020} chose the value $c_\omega=14.34$ motivated by a range of considerations.
In all calibrations, the parameter $m=m_\pi/m_\omega$ is close to its experimental value 0.176.
For more details, see Table \ref{tab: Omega crystal comparison}.

The results of our relaxation algorithm are given in Table \ref{tab: Omega crystal comparison}.
Plots of the baryon density and $\omega$ field are shown in Figure \ref{fig: Crystal densities} for $c_\omega=14.34$ (pictures for other calibrations are similar).
The 1/2 crystal always has a higher energy than the other three, but the $\alpha$, chain, and multiwall crystals are very close in energy and their relative ordering might depend on $c_\omega$.
For $c_\omega=14.34$ and $98.4$ the multiwall crystal appears to have the lowest energy.
For $c_\omega=34.7$ the chain crystal may have a lower energy, but the numerical values are too close to be confident of this.
For comparison, in the Skyrme model with no $\omega$ meson the multiwall-crystal has lowest energy \cite{Leask_2023}.

We have not explored how the crystal energies depend on the parameter $m_\pi/m_\omega$, but insight into this can be gained from the standard Skyrme model.
In the standard Skyrme model the crystal energies coalesce as $m_\pi$ tends to $0$, and when $m_\pi=0$ all four are related by $\textup{SO}(4)$ chiral rotations.
We expect similar behaviour in the omega-meson model.

As in \cite{Leask_2023}, the fundamental domain of the lattice $\Lambda$ is not cubic for the multiwall and chain crystals.
For the multiwall crystal the two equal side lengths are shorter than the third side, while for the chain crystal they are longer.

Finally, we note that the energies of the crystals are all greater than the bound \eqref{eq:bound} derived in Proposition \ref{prop:bound} by a factor of at least 3.5.
This is unsurprising, as the derivation of the bound ignores most terms in the energy.
The discrepancy seems to be greater for the lowest-energy solutions; this is because the bound \eqref{eq:bound} depends on the volume of the lattice fundamental domain, and solutions with lower energy happen to have large volumes.
We expect the bound \eqref{eq:bound} to be more effective when the size of the lattice fundamental domain is constrained to be small.

\begin{table*}[t]
    \centering
    \begin{tabular}{|c|c|c|c|c|c|c|c|}
        \hline
        Crystal & $c_{\omega}$ & $F_{\pi}\,(\textup{MeV})$ & $m_{\pi}\,(\textup{MeV})$ & $m_\omega\,(\textup{MeV})$ & $E$ & $E_0\,(\textup{MeV})$ & $n_0\,(\textup{fm}^{-3})$ \\
        \hline
        $\textup{SC}_{1/2}$ & $98.4$ & $124.0$ & $138.0$ & $782.0$ & $145.7761$ & $716.6$ & $0.128$ \\
        $\alpha$ & $98.4$ & $124.0$ & $138.0$ & $782.0$ & $145.4590$ & $715.0$ & $0.125$ \\
        chain & $98.4$ & $124.0$ & $138.0$ & $782.0$ & $145.4526$ & $715.0$ & $0.125$ \\
        multiwall & $98.4$ & $124.0$ & $138.0$ & $782.0$ & $145.4477$ & $715.0$ & $0.125$ \\
        \hline
        $\textup{SC}_{1/2}$ & $34.7$ & $186.0$ & $138.0$ & $782.0$ & $77.8067$ & $860.6$ & $0.526$ \\
        $\alpha$ & $34.7$ & $186.0$ & $138.0$ & $782.0$ & $77.7126$ & $859.6$ & $0.526$ \\
        multiwall & $34.7$ & $186.0$ & $138.0$ & $782.0$ & $77.6870$ & $859.3$ & $0.515$ \\
        chain & $34.7$ & $186.0$ & $138.0$ & $782.0$ & $77.6758$ & $859.1$ & $0.513$ \\
        \hline
        $\textup{SC}_{1/2}$ & $14.34$ & $139.8$ & $43.91$ & $249.5$ & $47.2632$ & $925.6$ & $0.060$ \\
        chain & $14.34$ & $139.8$ & $43.91$ & $249.5$ & $47.0900$ & $922.2$ & $0.052$ \\
        $\alpha$ & $14.34$ & $139.8$ & $43.91$ & $249.5$ & $47.0867$ & $922.1$ & $0.051$ \\
        multiwall & $14.34$ & $139.8$ & $43.91$ & $249.5$ & $46.8397$ & $917.3$ & $0.047$ \\
        \hline      
    \end{tabular}
    \caption{Comparison of the four crystalline solutions for the three different sets of parameters ($c_{\omega}=98.4$ \cite{Nappi_1984}, $c_{\omega}=34.7$ \cite{Sutcliffe_2009} and $c_{\omega}=14.34$ \cite{Gudnason_2020}).}
    \label{tab: Omega crystal comparison}
\end{table*}

%%%%%%%%%%%%%%%%%%%%%%%%%%%%%%%%%%%%%%%%%%%%%%%%%
% BETHE WEIZSACKER SEMF
%%%%%%%%%%%%%%%%%%%%%%%%%%%%%%%%%%%%%%%%%%%%%%%%%

\section{Bethe--Weizs\"acker semi-empirical mass formula}
\label{sec:SEMF}

In this section we use skyrmion crystals to estimate coefficients in the Bethe--Weizs\"acker semi-empirical mass formula.  This is an approximate formula for the binding energy of a nucleus with baryon number $B$ and takes the form
\begin{equation}
\label{eq: SEMF}
    E_b = a_V B - a_S B^{2/3} - a_C \frac{Z(Z-1)}{B^{1/3}} - a_A \frac{(N-Z)^2}{B} + \delta(N,Z).
\end{equation}
Here $Z$ is the number of protons and $N=B-Z$ the number of neutrons.
We will focus just on the first two terms, which are associated with the volume and surface area of the nucleus.
Typical empirically-determined values for their coefficients are $a_V=15.7-16.0\,\textup{MeV}$ and $a_S=17.3-18.4\,\textup{MeV}$ \cite{Reinhard_2006}.

We estimate these coefficients using the $\alpha$-crystal in the calibration of Gudnason--Speight \cite{Gudnason_2020}.
Following \cite{Baskerville_1996_2}, we model a $B=4N^3$ skyrmion as a cubic arrangement of $B=4$ skyrmions.
This can be regarded as a chunk of the $\alpha$-crystal.
To a first approximation, its energy is $E=BE_{\mathrm{crystal}}$, where $E_{\mathrm{crystal}}=922.1\,\text{MeV}$ is the energy per baryon number of the $\alpha$-crystal determined in the previous section.
To make a better approximation, we add on a term $6N^2E_{\mathrm{face}}$ representing the surface energy, in which $E_{\mathrm{face}}$ represents the surface energy of one face of one cubic $B=4$ skyrmion.
To calculate the binding energy, we subtract this from $B$ times the classical energy $E_1$ of a 1-skyrmion.
This leads to the formula,
\begin{equation}
\label{eq: skyrme SEMF}
E_b=(E_1-E_{\mathrm{crystal}})B - \frac{3}{\sqrt[3]{2}}E_{\mathrm{face}}B^{2/3},
\end{equation}
from which we can read off the coefficients $a_V$ and $a_S$.

We calculated $E_1$ using a fully three-dimensional arrested Newton flow and obtained the value $937.7\,\mathrm{MeV}$ in agreement with \cite{Gudnason_2020}.
Thus it remains to calculate $E_{\mathrm{face}}$.
We have done so using a method developed in \cite{Leask_2022}.
We regard the cubic lattice of $B=4$ skyrmions as vertical stack of horizontal layers, each layer being a square array of $B=4$ skyrmions.
A horizontal slab consisting of $n$ layers is doubly-periodic and has charge $4n$ in its fundamental domain.
The energy contained in a fundamental domain can be estimated as
\begin{equation}
\label{eq: slab energy}
4nE_{\mathrm{crystal}}+2E_{\mathrm{face}},
\end{equation}
because there are $n$ cubic $B=4$ skyrmions and two exposed faces.
On the other hand, for any given $n$, the energy can be calculated precisely using the same method as was used to calculate the energy of the Skyrme crystal, the only difference being that the domain is $\mathbb{R}\times\mathbb{T}^2$ rather than than $\mathbb{T}^3$, and only the components of the metric associated with $\mathbb{T}^2$ need to be varied.
We have constructed these doubly-periodic slabs for a range of values of $n$ using our relaxation algorithm.
% The resulting energies are recorded in Table \ref{tab:slab energies}.
% \begin{table}
% \centering
% \begin{tabular}{c|cccc}
% $n$ & 1 & 2 & 3 & \ldots \\ \hline
% $E$ & & & &
% \end{tabular}
% \caption{Energies of doubly-periodic slabs with $B=4n$.}
% \label{tab:slab energies}
% \end{table}

By comparing these numerically-determined energies with the approximate formula \eqref{eq: slab energy} using a trust region reflective algorithm we have estimated the coefficient $E_{\mathrm{face}}$ to be $7.8\,\mathrm{MeV}$.
Then by comparing equations \eqref{eq: SEMF} and \eqref{eq: skyrme SEMF} we obtain the coefficients,
\begin{equation}
a_V=E_1-E_{\mathrm{crystal}}=15.6\,\mathrm{MeV},\quad a_S=\frac{3}{\sqrt[3]{2}}E_{\mathrm{face}}=18.6\,\mathrm{MeV}.\label{eq:SEMF predictions}
\end{equation}
The resulting energy per nucleon curve is plotted in Figure \ref{fig: SEMF from Skyrme crystals}.
For comparison, we have also plotted the energy per nucleon for the three cubic skyrmions with $B=4N^3$ and $N=1,2,3$ which have been calculated using arrested Newton flow, with initial configuration constructed using the rational map and product approximations.
These are all close to the fitted curve, confirming the validity of the approximate formula \eqref{eq: skyrme SEMF}.
We also remark that the $\alpha$-particle clustering seen here matches the clustering structure of light nuclei, and was also observed in the Skyrme model coupled to $\rho$-mesons \cite{Naya_2018}.

\begin{figure*}[t]
    \centering
    \includegraphics[width=\textwidth]{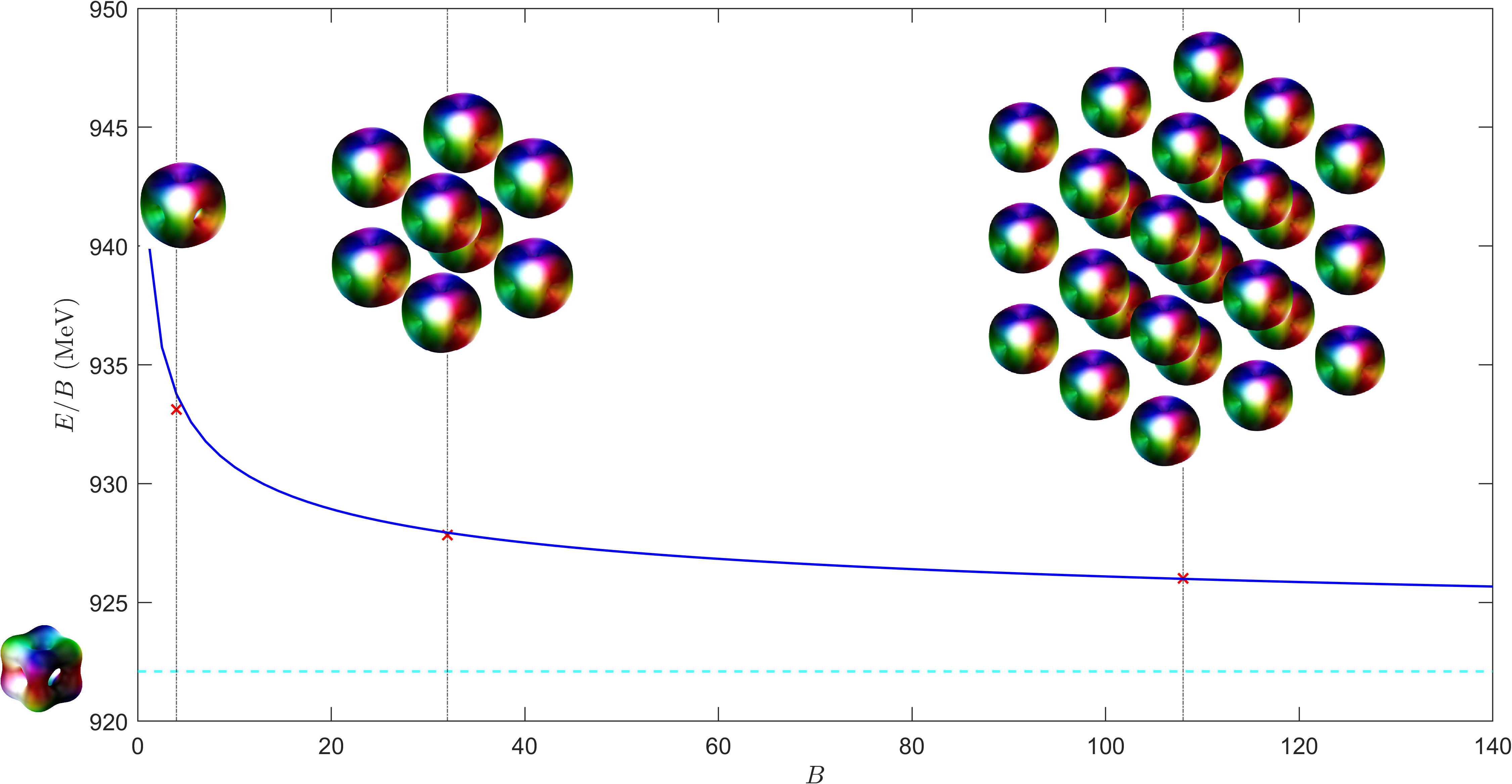}
    \caption{Plot of the Bethe--Weizs\"acker SEMF from the $\alpha$-particle approximation for the $\omega$-Skyrme model.}
    \label{fig: SEMF from Skyrme crystals}
\end{figure*}

Our predicted values \eqref{eq:SEMF predictions} are in close agreement with the empirically-determined values.
They are also a substantial improvement on the values $a_V=136\,\mathrm{MeV}$, $a_S=320\,\mathrm{MeV}$ obtained in the standard Skyrme model with massless pions \cite{Baskerville_1996_2}.
This is consistent with the observation made in \cite{Gudnason_2020} that including the $\omega$-meson substantially improves predictions of classical binding energies.

The remaining coefficients in the semi-empirical mass formula \eqref{eq: SEMF} have been investigated using Skyrme models elsewhere.
Ma \textit{et al.}\ \cite{Ma_2019} investigated the Coulomb energy in the standard Skyrme model.
They found that $a_C=0.608\,\textup{MeV}$, which is in excellent agreement with the experimentally determined value of $a_C=0.625\,\textup{MeV}$.
The asymmetry coefficient $a_A$ was calculated using the sextic Skyrme model in \cite{Leask_2024}.
By relating this to the symmetry energy of nuclear matter the value $a_A=23.8\,\textup{MeV}$ was obtained, which agrees extremely well with the experimental value $a_A=23.7\,\textup{MeV}$.
It would be interesting to calculate these coefficients also in the omega-meson model, but doing so is beyond the scope of this paper.

%%%%%%%%%%%%%%%%%%%%%%%%%%%%%%%%%%%%%%%%%%%%%%%%%
% COMPRESSION MODULUS
%%%%%%%%%%%%%%%%%%%%%%%%%%%%%%%%%%%%%%%%%%%%%%%%%

\section{Incompressibility of Nuclear Matter}
\label{sec:incompressibility}

We have seen that the Skyrme crystal in the $\omega$-meson model provides a reasonable model of binding energies of finite nuclei.
In this section we turn our attention to properties of infinite nuclear matter.

Consider isospin symmetric nuclear matter at zero temperature, and let $n_B$ be the baryon density, i.e.\ the number of protons and neutrons per unit volume.  The (symmetric) energy per baryon of such matter can be approximated about the nuclear saturation density $n_0$ by use of a power series expansion \cite{Garg_2018}:
\begin{equation}
E(n_B)/B = E_0 + \frac{1}{2} K_0 \frac{(n_B-n_0)^2}{9n_0^2} + \mathcal{O} \left( (n_B-n_0)^3 \right),
\label{eq: Symmetric energy expansion}
\end{equation}
where the first term, associated to the nuclear saturation point $n_0$, is identified with the saturation energy $E_0 = E(n_0)/B$.
The nuclear saturation density $n_0$ is defined to be the nuclear density such that $(\partial E)/(\partial n_B)|_{n_B=n_0}=0$.
There is no linear term since symmetric nuclear matter reaches a minimum of the energy at saturation.
The next term is the one of interest, it is the nuclear incompressibility coefficient, or compression modulus, $K_0$, which can be obtained from the expansion \eqref{eq: Symmetric energy expansion},
\begin{equation}
    \left. K_0 = \frac{9 n_0^2}{B} \frac{\partial^2 E}{\partial n_B^2} \right|_{n_0}.
\end{equation}
This is a fundamental quantity in nuclear physics as it is a measure of nuclear resistance under pressure at the saturation point, and imposes significant constraints on the nuclear matter equation of state.

To compute the compression modulus in the Skyrme model, we need to construct skyrmion crystals with a range of baryon densities.
In practice, we do this by minimizing the energy of the crystal with the volume of the fundamental cell of the lattice constrained to a constant value $V$.
The baryon density of the resulting solution is $n_B=B/V$, with $B$ being the baryon number per unit cell.
The algorithm that we used to solve the constrained energy-minimization problem is similar to the algorithm used for the unconstrained problem, except that the stress-energy tensor $S_{ij}$ is replaced by its projection, $S_{ij}-\frac{1}{3}S_{kl}g^{kl}g_{ij}$.  
This ensures that the volume of the fundamental domain, proportional to $\sqrt{\det g}$, is unchanged by arrested Newton flow.

We computed the compression modulus for the multiwall crystal, as this is the crystal with the lowest energy and hence the best candidate to model nuclear matter.
As in the previous section, we used the calibration of Gudnason--Speight \cite{Gudnason_2020}.  

The resulting data are plotted in Fig.~\ref{fig: Compression modulus} and we determine a compression modulus value of $K_0=370\,\textup{MeV}$, roughly 1/3 of the value $E_0=917.3\,\text{MeV}$ for the saturation energy obtained from our model.
This result is comparable in magnitude with other theoretical calculations of the compression modulus, and also with the range $250 < K_0 < 315 \, \textup{MeV}$ determined by a recent survey of experimental data \cite{Stone_2014}.

\begin{figure*}[t]
\centering
\includegraphics[width=0.6\textwidth]{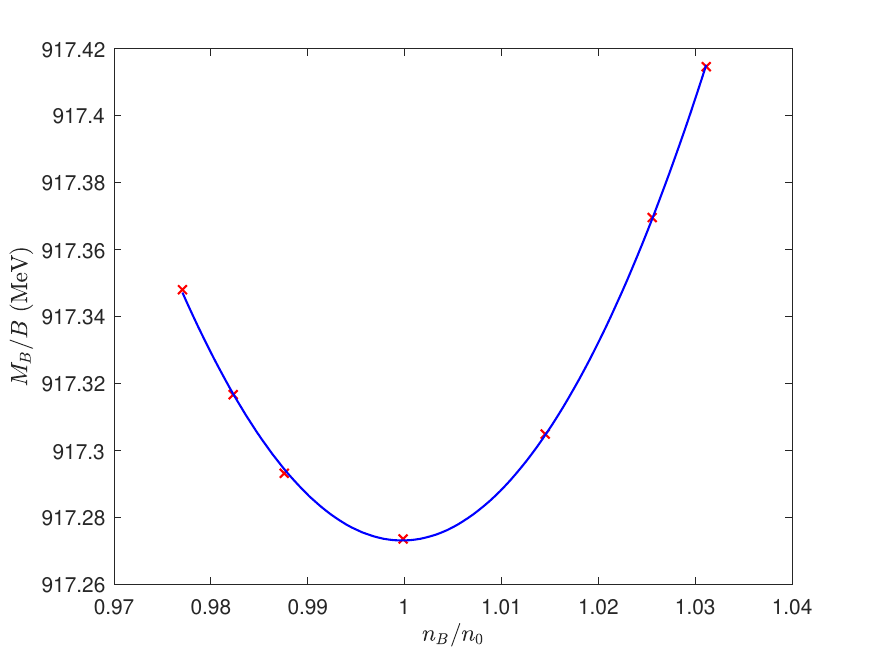}
\caption{The energy per baryon $E/B$ of the multi-wall crystal for various baryon densities $n_B$ near saturation $n_0$.}
\label{fig: Compression modulus}
\end{figure*}

This result is a significant improvement on values obtained in other Skyrme models.
For example, a recent calculation based on the sextic Skyrme model (and also using the multiwall crystal) obtained a value $K_0=1169\,\text{MeV}$ and a ratio $K_0/E_0=1.28$.
We have also carried out a calculation in the standard Skyrme model with massless pions, and obtained a value $K_0/E_0=0.973$.
Both values are much higher than the value $K_0/E_0\approx 1/4$ obtained from experiment.

The fact that our value for $K_0$ is small in comparison with other Skyrme models indicates that the minimum of $E(n_B)/B$ at $n_0=0.047\,\textup{fm}^{-3}$ is shallow, so energies do not change much as density is varied.
This is reminiscent of the fact that the calibration of \cite{Gudnason_2020} produces low classical binding energies.
It is also consistent with the observation that in other versions of the Skyrme model the multiwall crystal exhibits the smallest variation in energy amongst all crystals as the density is decreased \cite{Leask_2023}.
So our favourable result for the compression modulus can be attributed both to the success of the $\omega$-meson model in producing low binding energies and to particular properties of the multiwall crystal.

%%%%%%%%%%%%%%%%%%%%%%%%%%%%%%%%%%%%%%%%%%%%%%%%%
% CONCLUDING REMARKS
%%%%%%%%%%%%%%%%%%%%%%%%%%%%%%%%%%%%%%%%%%%%%%%%%

\section{Concluding remarks}
\label{sec:conclusion}

In this paper we have constructed crystalline configurations in the $\omega$-meson variant of the Skyrme model and investigated their applications to cold dense nuclear matter.  Our construction was based on a new algorithm that combined methods developed in \cite{Gudnason_2020} and \cite{Leask_2023}.  It minimizes energy with respect to variations in both the Skyrme field and the period lattice.

Using these new crystals, we have calculated coefficients in the Bethe--Weizs\"acker semi-empirical mass formula and the nuclear matter incompressibility coefficient.  In both cases we obtained results that are comparable with other theoretical models and with experimental evidence.  This is a substantial improvement on previous studies based on other variants of the Skyrme model.

The $\omega$-meson variant of the Skyrme model has also been successful in reproducing binding energies of light nuclei \cite{Gudnason_2020}.  So this variant of the Skyrme model shows promise as a model of nuclear physics, and is worthy of further study.

%%%%%%%%%%%%%%%%%%%%%%%%%%%%%%%%%%%%%%%%%%%%%%%%%
% ACKNOWLEDGEMENTS
%%%%%%%%%%%%%%%%%%%%%%%%%%%%%%%%%%%%%%%%%%%%%%%%%

\section*{Acknowledgments}

P.~Leask is supported by a Ph.D. studentship from UKRI, Grant No. EP/V520081/1.
We would like to thank the organisers of the Solitons (non)Integrability Geometry XI \href{https://sig.fais.uj.edu.pl/main}{(SIG XI)} conference, where this paper was conceptualised.

%%%%%%%%%%%%%%%%%%%%%%%%%%%%%%%%%%%%%%%%%%%%%%%%%
% APPENDIX
%%%%%%%%%%%%%%%%%%%%%%%%%%%%%%%%%%%%%%%%%%%%%%%%%

%\appendix

%%%%%%%%%%%%%%%%%%%%%%%%%%%%%%%%%%%%%%%%%%%%%%%%%
% BIBLIOGRAPHY
%%%%%%%%%%%%%%%%%%%%%%%%%%%%%%%%%%%%%%%%%%%%%%%%%

\bibliographystyle{JHEP.bst}
\bibliography{bib.bib}% Produces the bibliography via BibTeX.

\end{document}